\theoremstyle{theorem}
\newtheorem{theorem}{Theorem}[section]
\newcommand{\st}{\text{s.t.}}
\newcommand{\opt}{\text{OPT}}
\newcommand{\alt}{\text{alt}}
\begin{document}

\title{On the Randomized Metric Distortion Conjecture}

\author{Haripriya Pulyassary\thanks{ \texttt{\{hpulyassary, cswamy\}@uwaterloo.ca.} Dept. of Combinatorics and Optimization, University of Waterloo} \and Chaitanya Swamy$^*$ }
\maketitle
\begin{abstract}
In the single winner determination problem, we have $n$ voters and $m$ candidates and each voter $j$ incurs a cost $c(i, j)$ if candidate $i$ is chosen. Our objective is to choose a candidate that minimizes the expected total cost incurred by the voters; however as we only have access to the agents' preference rankings over the outcomes, a loss of efficiency is inevitable. This loss of efficiency is quantified by \emph{distortion}. We give an instance of the metric single winner determination problem for which any randomized social choice function has distortion at least 2.063164. This disproves the long-standing conjecture that there exists a randomized social choice function that has a worst-case distortion of at most 2. 
\end{abstract}

\section{Introduction}

Social choice theory \cite{brandtHandbookComputationalSocial2016} is concerned with aggregating the preferences of agents into a single outcome. While it can be convenient to assume that agent preferences are captured by a \emph{cardinal} utility function that assigns a numerical value to each outcome, in many contexts we only have access to \emph{ordinal} information, namely the agents' preference rankings over the outcomes. There are many reasons why such situations may arise; perhaps the most prominent is that the agents themselves may find it difficult to place numerical values on the possible outcomes \cite{distortionSurvey}. As ordinal preference rankings are not as expressive as cardinal utilities, a loss of efficiency, in terms of the quality of the outcome computed, is inevitable. Procaccia and Rosenschein \cite{procacciaDistortionCardinalPreferences2006} introduced the notion of \emph{distortion}, which quantifies the worst-case efficiency loss for a given social choice function.

In the single winner determination problem, we have $n$ voters and $m$ candidates; each voter $j$ incurs a cost $c(i, j)$ if candidate $i$ is chosen. Our objective is to choose a candidate that minimizes the total cost incurred by the voters. If no additional assumptions are made regarding the underlying costs, there exists an instance such that every randomized social choice function has distortion at least $\Omega(\sqrt{m})$ \cite{boutilierOmegaSqrtmLB}. However, in many settings, the costs represent literal or ideological distances, and consequently must be metric. Under this assumption, there have been a string of positive results \cite{anshelevichCopeland, kempeAnalysisFrameworkMetric2020, munagalaImprovedMetricDistortion2019} and recently, Gkatzelis et. al \cite{gkatzelisResolvingOptimalMetric2020} proved that there exists a deterministic social choice function that has a distortion of 3. Moreover, it is known that this bound is tight \cite{anshelevichCopeland}. 

It has been shown that, for the metric single winner determination problem, randomized social choice functions can be strictly more powerful than deterministic mechanisms \cite{anshelevichRandomizedSocialChoice2017, gkatzelisResolvingOptimalMetric2020, kempeCommunicationDistortionRandomness2020}. In particular, Gkatzelis et. al \cite{gkatzelisResolvingOptimalMetric2020} give a randomized social choice function whose distortion is $3 - \frac{2}{m}$. A long-standing conjecture is that there exists a randomized social choice function that has a worst-case distortion of at most 2.  

We develop a linear program that computes an instance-optimal randomized social choice function; i.e., a distribution that achieves minimum distortion for any given instance. Using this, we present a simple instance, where the optimal distortion achievable by randomized SCFs is strictly larger than 2, roughly 2.063164, thereby disproving the above conjecture. This conjecture has also been refuted independently by Charikar and Ramakrishnan  \cite{charikarMetricDistortionBounds2021}, who also give a slightly larger lower bound. However, our techniques differ significantly from theirs and, in particular, the LP for computing an instance-optimal randomized social choice function may be of independent interest.

\section{Preliminaries}
The metric single winner determination problem can be stated equivalently as the 1-median problem where the voters are the set of clients, and the candidates correspond to the set of facilities. Let $\mathcal{C}$ be a set of clients and $\mathcal{F}$ be a set of facilities, located in a metric space $(\mathcal{M}, d)$, where $d_{fi}$ is the distance from client $i$ to facility $f$. We would like to choose a facility that minimizes $\sum_{j \in \mathcal{C}}d_{fj}$,  the total cost incurred by the clients. If the clients disclose their distance vectors, this optimization problem becomes quite straightforward, however, in our setting, the system designer is only provided the clients' preference rankings over facilities, $\sigma = [\succeq_i]_{i \in \mathcal{C}}$. In this context, one should assume that the underlying metric $d$ is \textit{consistent} with $\sigma$ (denoted $d \triangleleft \sigma$); that is, if a client $i$ prefers facility $a$ over facility $b$ (denoted $a \succeq_i b$, or $a \succeq b$ if the context is sufficiently clear), $a$ is closer to $i$ than $b$ ($d_{ai} \leq d_{bi}$). 

A randomized \emph{social choice function} (SCF) $f$ maps a preference profile $\sigma$ to a distribution over the candidates/facilities. As aforementioned, the notion of \emph{distortion} can be used to quantify the efficiency-loss in the worst case. Formally, for a randomized social choice function $f$, we define \[ \text{distortion}(f) = \sup_{\sigma }\sup_{d \triangleleft\, \sigma}  \frac{\mathbb{E}\left[\sum_{j=1}^n d_{f(\sigma)j}\right]}{\min_{o \in \mathcal{F}}\sum_{j=1}^n d_{oj}}  \]

\clearpage
\section{Computing an instance-optimal randomized SCF}
In this section we give a linear program which, given a preference profile $\sigma$, computes an instance-optimal randomized social choice function. In order to do so, we first consider the adversary's problem: Given a preference profile $\sigma$, randomized SCF $f$, and optimal facility $o$, the adversary wishes to compute a metric $d$ that is consistent with $\sigma$ and maximizes $\frac{\mathbb{E}[\sum_{j \in \mathcal{C}} d_{f(\sigma)j}]}{\sum_{j \in \mathcal{C}} d_{oj}}$. This can be done by solving the linear program \eqref{exact-primal}, where $q$ is the distribution over the facilities specified by $f(\sigma)$. In the following, we use $\alt(k, r)$ to  denote the $r$th ranked outcome in $\succeq_k$.  

{
\begin{align}
    \max & ~~ \sum_{i \in \mathcal{F}} \sum_{j \in \mathcal{C}} q_i d_{ij} \tag{$P_{qo}$}\label{exact-primal}\\
    \st & ~~ \sum_{j \in \mathcal{C}} d_{oj} \leq 1 \label{exactP:normalize}\\
    & ~~ d_{\alt(j, r), j} \leq d_{\alt(j, r+1),j} &&  \forall j \in \mathcal{C}\,  \forall r \in [m-1] \label{exactP:consistent}\\
    & ~~d_{ij} \leq d_{ik} + d_{jk} && \forall i, j, k \in \mathcal{C} \cup \mathcal{F} \label{exactP:tri}\\
    & ~~ d \geq 0 \label{exactP:nonneg}
\end{align}
}%

Constraint \eqref{exactP:normalize} normalizes $\sum_{j \in \mathcal{C}} d_{oj}$ (allowing us to avoid having a ratio in the objective), constraint \eqref{exactP:consistent} ensures that the metric $d$ is consistent with the preference profile $\sigma$, and constraints \eqref{exactP:tri}-\eqref{exactP:nonneg} enforce that $d$ is a metric. The optimal solution $d$ is a metric that maximizes $\frac{\mathbb{E}[\sum_{j \in \mathcal{C}} d_{f(\sigma)j}]}{\sum_{j \in \mathcal{C}} d_{oj}} =\frac{\sum_{i \in \mathcal{F}}\sum_{j \in \mathcal{C}} q_id_{ij}}{\sum_{j \in \mathcal{C}} d_{oj}}$.

We now return to our original task, which is to compute an instance-optimal randomized SCF. Equivalently, we wish to compute an optimal distribution $q \in \Delta_F$ that minimizes $\max_{o \in \mathcal{F}} \opt \eqref{exact-primal}$. Notice that \eqref{exact-primal} is of the form $\eqref{exact-primal}: \min_d\{ c^Td : A^od \leq b^o, d \geq 0 \}$, where $c$ depends linearly on $q$ and only $A^o$ and $b^o$ depend on the choice of $o$. The dual of \eqref{exact-primal} is $(D_{qo})$: $\min_y \{y^Tb^o : y^TA^o \geq c, y \geq 0\}$. So, the problem of computing an instance-optimal randomized SCF is equivalent to
\[ \min_{q \in \Delta_F} \max_{o \in F} \opt\eqref{exact-primal} = \min_{q \in \Delta_F} \{\gamma :  \opt \eqref{exact-primal} \leq \gamma ~ \forall o \in \mathcal{F}\} = \min_{q \in \Delta_F} \{\gamma : \exists y^o \geq 0 ~ \st (y^o)^TA^o \geq c, (y^o)^Tb \leq \gamma ~ \forall o \in \mathcal{F}\} \] 
where the last transition follows due to LP-duality. We expand the constraints of the final LP, $(y^o)^TA^o \geq c, (y^o)^Tb \leq \gamma$, to obtain \eqref{opt-RSCF} given below. 

We use $\mathcal{T}_{\ell F}$ to denote the set of triangles $i, j, k \in \mathcal{C} \cup \mathcal{F}$ that contain $\ell$ facilities. More precisely, $\mathcal{T}_{0F}$ denotes triangles consisting of three clients, $\mathcal{T}_{1F}$ denotes triangles consisting of a single facility and two clients, $\mathcal{T}_{2F} = \{(j, i_1, i_2): j \in C,  i_1, i_2 \in F, i_1 \neq i_2, i_1 \succeq_j i_2\}$, and $\mathcal{T}_{3F}$ denotes triangles consisting of three facilities. 

{
\begin{align}
    \min & ~~ \gamma \tag{Best-Dist} \label{opt-RSCF}\\
    \st & ~~\sum_{T=(i, j, \cdot) \in \mathcal{T}_{1F}} (-\alpha^{(1), o}_{1, T} + \alpha^{(1), o}_{2, T} - \alpha^{(1), o}_{3, T}) + \sum_{T=(i, \cdot, j) \in \mathcal{T}_{1F}} (-\alpha^{(1), o}_{1, T} - \alpha^{(1), o}_{2, T} + \alpha^{(1), o}_{3, T}) \notag \\
    & \qquad + \sum_{T=(j, \cdot, i) \in \mathcal{T}_{2F}} (\alpha^{(2), o}_{1, T} - \alpha^{(2), o}_{2, T}) + \sum_{T=(j, i, \cdot) \in \mathcal{T}_{2F}}(-\alpha^{(2), o}_{1, T} - \alpha^{(2), o}_{2, T}) + \beta^o_{j1}\mathbbm{I}_{[i = \alt(j, 1)]}  \notag \\
    &\qquad  + \sum_{r = 1}^{n-2}(\beta^o_{j, r+1} - \beta^o_{j, r})\mathbbm{I}_{[i = \alt(j, r+1)]} - \beta^o_{j,m-1}\mathbbm{I}_{[i = \alt(j, m)]} + \gamma\mathbbm{I}_{[i = o]} \geq q_i && \forall j \in \mathcal{C}, \forall i \in \mathcal{F} \forall o \in \mathcal{F} \\[10pt]
    & ~~\sum_{T=(i_1, i_2, \cdot) \in \mathcal{T}_{3F}} (\alpha^{(3), o}_{1,T} - \alpha^{(3), o}_{2, T}-\alpha^{(3), o}_{3, T}) + \sum_{T=(i_1,\cdot, i_2) \in \mathcal{T}_{3F}} (-\alpha^{(3), o}_{1,T} + \alpha^{(3), o}_{2, T}-\alpha^{(3), o}_{3, T}) \notag \\
    & \qquad + \sum_{T=(\cdot, i_1, i_2) \in \mathcal{T}_{3F}} (-\alpha^{(3), o}_{1,T} - \alpha^{(3), o}_{2, T} + \alpha^{(3), o}_{3, T}) + \sum_{T \in \mathcal{T}_{2F}: i_1, i_2 \in T} (\alpha^{(2), o}_{2, T} - \alpha^{(2), o}_{1, T}) \geq  0 && \forall i_1, i_2 \in \mathcal{F} \forall o \in \mathcal{F}\\[10pt]
    &~~ \sum_{(j_1, j_2, \cdot) \in \mathcal{T}_{0F}} (\alpha^{(0), o}_{1, T} - \alpha^{(0), o}_{2, T} - \alpha^{(0), o}_{3, T}) + \sum_{(j_1, \cdot, j_2) \in \mathcal{T}_{0F}} (-\alpha^{(0), o}_{1, T} + \alpha^{(0), o}_{2, T} - \alpha^{(0), o}_{3, T}) \notag \\
    &\qquad + \sum_{(\cdot,j_1, j_2) \in \mathcal{T}_{0F}} (-\alpha^{(0), o}_{1, T} - \alpha^{(0), o}_{2, T} + \alpha^{(0), o}_{3, T}) + \sum_{(\cdot, j_1, j_2) \in \mathcal{T}_{1F}}(\alpha^{(1), o}_{1, T} - \alpha^{(1), o}_{2, T} - \alpha^{(1), o}_{3, T}) \geq 0 && \forall j_1, j_2 \in \mathcal{C} \forall o \in \mathcal{F}\\[10pt]
& ~~ \sum_{i \in \mathcal{F}}q_i \geq 1\\
    & ~~ \alpha^{(0), o}, \alpha^{(1), o}, \alpha^{(2), o}, \alpha^{(3), o}, \beta^o, \gamma, q \geq 0 && \forall o \in \mathcal{F}
\end{align}
}%

The dual of \eqref{opt-RSCF} is equivalent to 
{
\begin{align*}
    \max & ~~ \varphi \tag{Best-Dist-Dual} \label{opt-RSCF-dual}\\
    \st & ~~ \sum_{o \in \mathcal{F}}\sum_{j \in \mathcal{C}}d^o_{oj} \leq 1\\
    & ~~ \varphi - \sum_{o \in \mathcal{F}} \sum_{j \in \mathcal{C}} d^o_{ij} \leq 0 && \forall i \in \mathcal{F}\\
    & ~~ d^o_{\alt(j, r), j} \leq d^o_{\alt(j, r+1),j} &&  \forall j \in \mathcal{C}\,  \forall r \in [m-1] \forall o \in \mathcal{F} \\
    & ~~d^o_{ij} \leq d^o_{ik} + d^o_{jk} && \forall i, j, k \in \mathcal{C} \cup \mathcal{F}\, \forall o \in \mathcal{F}\\
    & ~~ \varphi, d^o \geq 0 && \forall o \in \mathcal{F}
\end{align*}
}%

\subsection{A lower bound for the distortion of randomized SCFs}
We show that, for any randomized social choice function $f$, $distortion(f) \geq 2.063164$. 

\begin{theorem}
There exists an instance $(\mathcal{C}, \mathcal{F}, \sigma)$ such that, for any randomized social choice function $f$, $\sup_{d \triangleleft\, \sigma} \frac{\mathbb{E}[\sum_{j \in \mathcal{C}} d_{f(\sigma)j}]}{\min_{o \in \mathcal{F}} \sum_{j \in \mathcal{C}} d_{oj}} \geq 2.063164$. 
\end{theorem}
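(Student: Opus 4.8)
The plan is to exhibit an explicit instance and then certify the claimed lower bound via the LP machinery developed in the preceding section. Concretely, I would specify a small set of facilities $\mathcal{F}$ and clients $\mathcal{C}$ together with a preference profile $\sigma$; the goal is to make the instance as symmetric as possible so that the optimal randomized SCF is forced into a tradeoff it cannot fully escape. Since \eqref{opt-RSCF} computes the exact instance-optimal distortion $\gamma^*$ for any given $\sigma$, proving the theorem reduces to showing that for my chosen instance the optimal value of \eqref{opt-RSCF} is at least $2.063164$. Equivalently, by LP-duality, it suffices to produce a single feasible solution to \eqref{opt-RSCF-dual} whose objective $\varphi$ attains (or exceeds) this value, since weak duality then forces $\gamma^* \geq \varphi$.

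First I would design the instance. A natural guess is a highly symmetric configuration — for example, clients and facilities arranged so that every facility looks equally attractive to the electorate, so that by symmetry any candidate optimal distribution $q$ can be taken to be uniform over $\mathcal{F}$. The point of the construction is that for any distribution the adversary can choose a consistent metric $d$ making the expected cost of the chosen facility large relative to the true optimum $\min_o \sum_j d_{oj}$. I would look for an instance small enough (a handful of candidates, a symmetric voter distribution) that \eqref{exact-primal} and hence \eqref{opt-RSCF} can be solved or analyzed explicitly, yet rich enough that the forced distortion strictly exceeds $2$.

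Second, the heart of the argument is the dual certificate. Rather than reasoning about all distributions $q$, I would exhibit, for each candidate optimal facility $o$, a consistent metric $d^o$ (a feasible point of the inner adversary LP \eqref{exact-primal}), packaged together so that $\sum_{o}\sum_{j} d^o_{oj} \leq 1$ and $\varphi \leq \sum_o \sum_j d^o_{ij}$ holds for every facility $i$ simultaneously. The second family of constraints is exactly what prevents any single facility $i$ from being cheap under all the adversary's metrics at once: it says the total pull toward $i$ across the metrics $\{d^o\}$ is at least $\varphi$ for every $i$, so no distribution $q$ can avoid paying at least $\varphi$ against this collection. Constructing these metrics explicitly — choosing the inter-facility and client-to-facility distances consistent with $\sigma$ and verifying the triangle inequality \eqref{exactP:tri} for all triples — and then computing the resulting $\varphi$ is the step where the constant $2.063164$ emerges, likely as the root of a small algebraic expression coming from balancing the distances.

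The main obstacle will be the combination of exact verification and optimality. Exhibiting \emph{some} dual-feasible solution is routine; exhibiting one whose value is provably at least $2.063164$ requires either solving the (finite but nontrivial) LP \eqref{opt-RSCF} to optimality on the chosen instance, or guessing the extremal metric configuration and certifying its optimality by also displaying a matching primal distribution $q$ achieving the same $\gamma$. I would aim for the latter, complementary-slackness-guided route: posit the optimal $q$ (e.g.\ uniform by symmetry) and the optimal adversarial metrics, then check that the primal and dual objectives coincide, which simultaneously certifies the lower bound and confirms that no randomized SCF does better on this instance. The delicate part is ensuring every triangle inequality among the mixed client/facility points holds in the extremal metric, since it is easy to write down distances that violate \eqref{exactP:tri} among triples in $\mathcal{T}_{3F}$ or $\mathcal{T}_{1F}$; getting this exactly right is what pins down the precise value of the bound.
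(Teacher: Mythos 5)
Your overall strategy is exactly the paper's: fix an instance, observe that \eqref{opt-RSCF} computes the instance-optimal distortion, and certify the lower bound by exhibiting a feasible solution to \eqref{opt-RSCF-dual} with objective $2.063164$ (a family of consistent metrics $d^o$, one per candidate optimum $o$, normalized so that $\sum_o \sum_j d^o_{oj} \le 1$ and such that every facility $i$ satisfies $\sum_o \sum_j d^o_{ij} \ge \varphi$). You also correctly identify the cleanest way to certify tightness, namely a matching primal distribution $q$ with the same objective value. So the framework is right.

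The genuine gap is that the proposal never produces the object whose existence the theorem asserts. The entire content of this theorem is a concrete witness: a specific preference profile $\sigma$ together with an explicit dual certificate whose value is $2.063164$, and none of that appears in your write-up --- you describe what such a certificate would look like but do not exhibit one, so nothing is actually proved. Worse, the search heuristic you propose would likely steer you away from the witness: you suggest a highly symmetric instance in which the optimal $q$ is uniform, whereas the paper's instance is markedly asymmetric --- seven clients split into groups of sizes $3$, $3$, and $1$ with three distinct rankings over seven facilities, and an optimal distribution ranging from $q_g \approx 0.022$ to $q_c \approx 0.388$. A fully symmetric instance (e.g.\ one where every facility is top-ranked by an equal share of voters) tends to admit low-distortion randomized rules; the known hard instances exploit imbalance between voter blocks. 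Finally, the constant does not emerge as ``the root of a small algebraic expression'' from a hand analysis; it is the optimal value of the (large but finite) LP \eqref{opt-RSCF} on the chosen instance, and the verification consists of checking feasibility and equality of objectives for the explicitly listed primal and dual solutions. Until you commit to an instance and display a feasible dual solution of value at least $2.063164$, the argument establishes nothing beyond the reduction already contained in the previous section.
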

\begin{proof}
Consider the following instance with $\mathcal{C} = \{1, 2, 3, 4, 5, 6,7\}$, and $\mathcal{F} = \{a, b, c, d, e, f, g\}$. The preference profile $\sigma$ is given by 
\begin{align*}
    1, 2, 3:  c \succeq e \succeq b \succeq a \succeq f \succeq g \succeq d, &&  4, 5, 6: d \succeq g \succeq f\succeq a\succeq e\succeq b\succeq c, && 7: b\succeq a\succeq f\succeq g\succeq e\succeq c\succeq d
\end{align*}
For this instance, $\opt\eqref{opt-RSCF} = 2.063164$ -- that is, for any randomized social choice function $f$, \\$\sup_{d \triangleleft\, \sigma} \frac{\mathbb{E}[\sum_{j \in \mathcal{C}} d_{f(\sigma)j}]}{\min_{o \in \mathcal{F}} \sum_{j \in \mathcal{C}} d_{oj}} \geq 2.063164$. The instance-optimal distribution $q$ is 
\begin{align*}
    q_a = 0.039301 && q_b = 0.121723 && q_c = 0.388299 && q_d = 0.291224\\
    q_e = 0.107872 && q_f = 0.029475 && q_g = 0.022107
\end{align*}
The values of the remaining variables are presented in Appendix \ref{appendix:opt-primal-solution}.

The optimal solution to the dual of \eqref{opt-RSCF} also has value 2.063164, and is given below. The dual variables $d^a, d^b, \ldots, d^g$ can be interpreted as metrics, which are represented by the graphs given on the following page. In this solution, clients with the same preference rankings are colocated -- namely, clients in $\mathcal{C}_1 = \{1, 2, 3\}$ are colocated and clients in $\mathcal{C}_2 = \{4, 5, 6\}$ are colocated. The exact values of the dual variables are presented in Appendix B. 

\begin{figure}[h]\label{metricgraphs}
    \centering
    \begin{subfigure}[b]{0.45\textwidth}
	\centering
\begin{tikzpicture}[
roundnode/.style={circle, draw=black!70,fill=white!70, thick, minimum size=3mm},
squarednode/.style={rectangle, draw=black!60, fill=white!5,  thick, rounded corners, minimum size=7mm},
]
\node[roundnode]      (c1)  {$\mathcal{C}_1$};
\node[squarednode]      [right=of c1](a)  {$a$};
\node[roundnode]      [right=of a](c2)  {$\mathcal{C}_2$};
\node[roundnode]      [below=of a](c7)  {7};
\node[squarednode, below=of c1](b)  {$b$}; 
\node[below of=b](binvis){};
\node[squarednode, above left=of c1](ce)  {$c, e$}; 
\node[squarednode, above right=of c2](dfg)  {$d, f, g$}; 
\draw (c1) -- (a) node [midway, above, fill=none]{$M_a$};
\draw (c1) -- (b) node [midway, right, fill=none]{$M_a$};
\draw (c1) -- (ce) node [midway, left, fill=none]{$M_a$};
\draw (c2) -- (dfg) node [midway, right, fill=none]{$M_a$};
\draw (c2) -- (a) node [midway, above, fill=none]{$M_a$};
\draw (c7) -- (a) node [midway, right, fill=none]{$M_a$};
\draw (c7) -- (b) node [midway, below, fill=none]{$M_a$};
\end{tikzpicture}
 \caption{For any $i, j \in \mathcal{C} \cup \mathcal{F}$, $d^a_{ij}$ is the shortest-path distance in the above graph, where $M_a = 0.014507$}
\end{subfigure}\hfill
\begin{subfigure}[b]{.45\textwidth}
	\centering
\begin{tikzpicture}[
roundnode/.style={circle, draw=black!70,fill=white!70, thick, minimum size=3mm},
squarednode/.style={rectangle, draw=black!60, fill=white!5,  thick, rounded corners, minimum size=7mm},
]
\node[roundnode]      (c1)  {$\mathcal{C}_1$};
\node[squarednode] [right of=c1, right=10pt](e) {$e$};
\node[squarednode] [above of=e](b) {$b, 7$};
\node[squarednode] [below of=e, below=2pt](c) {$c$};

\node[squarednode]      [above right=of e](adfg)  {$a, d, f, g$};
\node[roundnode]      [below=of adfg](c2)  {$\mathcal{C}_2$};
\draw (c1) -- (b) node [midway, above, fill=none]{$M_b$};
\draw (c1) -- (e) node [midway, above, fill=none]{$M_b$};
\draw (c1) -- (c) node [midway, above, fill=none]{$M_b$};
\draw (c2) -- (adfg) node [midway, right, fill=none]{$M_b$};
\draw (c2) -- (b) node [midway, above, fill=none]{$M_b$};
\draw (c2) -- (e) node [midway, above, fill=none]{$M_b$};
\end{tikzpicture}
\caption{For any $i, j \in \mathcal{C} \cup \mathcal{F}$, $d^b_{ij}$ is the shortest-path distance in the above graph, where $M_b = 0.020955$}
\end{subfigure}\vspace{10pt}

\begin{subfigure}[b]{.45\textwidth}
	\centering
\begin{tikzpicture}[
roundnode/.style={circle, draw=black!70,fill=white!70, thick, minimum size=3mm},
squarednode/.style={rectangle, draw=black!60, fill=white!5,  thick, rounded corners, minimum size=7mm},
]
\node[roundnode]    (c2)  {$\mathcal{C}_2$};
\node[squarednode] [right of=c2, right=10pt](c) {$c, \mathcal{C}_1$};
\node[squarednode] [above of=c, above=2pt](d) {$d$};\node[squarednode] [below of=c](rest) {$a, b, e, f, g$};
\node[roundnode] [right of=c,  right=10pt](c7){$7$};
\draw (c2) -- (d) node [midway, above, above=.25, fill=none]{$M_c$};
\draw (c2) -- (c) node [midway, above, fill=none]{$M_c$};
\draw (c2) -- (rest) node [midway, below left, fill=none]{$M_c$};
\draw (c7) -- (c) node [midway, above, fill=none]{$M_c$};
\draw (c7) -- (rest) node [midway, below right, fill=none]{$M_c$};\end{tikzpicture}

\caption{For any $i, j \in \mathcal{C} \cup \mathcal{F}$, $d^c_{ij}$ is the shortest-path distance in the above graph, where $M_c =0.038866$}
\end{subfigure}\hfill 
\begin{subfigure}[b]{.45\textwidth}
	\centering
\begin{tikzpicture}[
roundnode/.style={circle, draw=black!70,fill=white!70, thick, minimum size=3mm},
squarednode/.style={rectangle, draw=black!60, fill=white!5,  thick, rounded corners, minimum size=7mm},
]
\node[roundnode] (c7){$7$};
\node[roundnode] [right of=c7, right=1](c2)  {$\mathcal{C}_1$};
\node[squarednode] [below of=c7, below=1](d) {$d, \mathcal{C}_2$};
\node[squarednode] [below of=c2, below=1](rest) {$a, b, c, e, f, g$};
\draw (c7) -- (c2) node [midway, above, fill=none]{$ M_d$};
\draw (c2) -- (d) node [midway,above right, above=4pt, fill=none]{$ M_d$};
\draw (c2) -- (rest) node [midway, above right, fill=none]{$ M_d$};
\draw (d) -- (c7) node [midway, left, fill=none]{$ M_d$};
\draw (rest) -- (c7) node [midway, left, fill=none]{$ M_d$};
\end{tikzpicture}
\caption{For any $i, j \in \mathcal{C} \cup \mathcal{F}$, $d^d_{ij}$ is the shortest-path distance in the above graph, where $M_d =0.051820$}
\end{subfigure}
    \begin{subfigure}[b]{.45\textwidth}
	\centering
\begin{tikzpicture}[
roundnode/.style={circle, draw=black!70,fill=white!70, thick, minimum size=3mm},
squarednode/.style={rectangle, draw=black!60, fill=white!5,  thick, rounded corners, minimum size=7mm},
]
\node[squarednode](d)  {$d$};
\node[roundnode] [right=of d, right=1](c2)  {$\mathcal{C}_2$};
\node[squarednode] [below of=c2, below=1](rest) {$a, f, g$};
\node[roundnode] [right of=rest, right=1](c7){$7$};
\node[squarednode] [right of=c2, right=1](e)  {$e$};
\node[squarednode] [below of=c7, below=1](b)  {$b$};
\node[roundnode] [right=of e, right=1](c1)  {$\mathcal{C}_1$};
\node[squarednode] [below of=c1, below=1](c)  {$c$};
\draw (d) -- (c2) node [midway, above, fill=none]{$M_e$};
\draw (c2) -- (rest) node [midway, left, fill=none]{$M_e$};
\draw (c2) -- (e) node [midway, above, fill=none]{$M_e$};
\draw (rest) -- (c7) node [midway, below, fill=none]{$M_e$};
\draw (c7) -- (b) node [midway, left, fill=none]{$M_e$}; 
\draw (c7) -- (e) node [midway, left, fill=none]{$M_e$};
\draw (c1) -- (e) node [midway, above, fill=none]{$M_e$};
\draw (c1) -- (c) node [midway, left, fill=none]{$M_e$};
\end{tikzpicture}
\caption{For any $i, j \in \mathcal{C} \cup \mathcal{F}$, $d^e_{ij}$ is the shortest-path distance in the above graph, where $M_e =0.013433$}
\end{subfigure}\hfill 
\begin{subfigure}[b]{.45\textwidth}
	\centering
\begin{tikzpicture}[
roundnode/.style={circle, draw=black!70,fill=white!70, thick, minimum size=3mm},
squarednode/.style={rectangle, draw=black!60, fill=white!5,  thick, rounded corners, minimum size=7mm},
]
\node[squarednode] (ce)  {$c, e$};
\node[right of=ce, right=1](invis){};
\node[squarednode] [right of=invis, right=1](ab) {$a, b$};
\node[squarednode] [below of=ce, below=2](dg)  {$d, g$};
\node[squarednode] [below of=ab, below=2](f)  {$f$};
\node[roundnode] [below of=invis](c1)  {$\mathcal{C}_1$};
\node[roundnode] [below of=c1, below=1](c2)  {$\mathcal{C}_2$};
\node[roundnode] [above=of f, above=1](c7)  {$7$};
\draw (ce) -- (dg) node [midway, left, fill=none]{$2M_f$};
\draw (f) -- (c1) node [midway, below left, fill=none]{$M_f$};
\draw (ce) -- (c1) node [midway, above, fill=none]{$M_f$};
\draw (ab) -- (c1) node [midway, above left, fill=none]{$M_f$};
\draw (c2) -- (dg) node [midway, below, fill=none]{$M_f$};
\draw (c2) -- (f) node [midway, below, fill=none]{$M_f$};
\draw (ab) -- (c7) node [midway, right, fill=none]{$M_f$};
\draw (c7) -- (f) node [midway, right, fill=none]{$M_f$};
\draw (ab) -- (dg)  node [midway, below left, below=.75, left=.5, fill=none]{$2M_f$};
\end{tikzpicture}
\caption{For any $i, j \in \mathcal{C} \cup \mathcal{F}$, $d^f_{ij}$ is the shortest-path distance in the above graph, where $M_f =0.019343$}
\end{subfigure}\vspace{10pt}

\begin{subfigure}[b]{.45\textwidth}
	\centering
\begin{tikzpicture}[
roundnode/.style={circle, draw=black!70,fill=white!70, thick, minimum size=3mm},
squarednode/.style={rectangle, draw=black!60, fill=white!5,  thick, rounded corners, minimum size=7mm},
]
\node[squarednode] (d)  {$d$};
\node[squarednode] [right of=d, right=2](abf){$a, b, f$};
\node[roundnode] [below of=d, below=1](c2){$\mathcal{C}_2$};
\node[roundnode] [below of=abf, below=1](c7){$7$};
\node[squarednode][below of=c7, below=1](g){$g$};
\node [right of=abf, right=.33] (invis){};
\node[roundnode] [right of=invis, right=1] (c1){$\mathcal{C}_1$};
\node[squarednode] [below of=invis, below=.3](ce) {$c, e$};
\draw (d) -- (abf) node [midway, above, fill=none]{$2M_g$};
\draw (d) -- (ce) node [midway, above, fill=none]{$2M_g$};
\draw (d) -- (c2) node [midway, left, fill=none]{$M_g$};
\draw (g) -- (c2) node [midway, below left, fill=none]{$M_g$};
\draw (g) -- (c7) node [midway, below left, fill=none]{$M_g$};
\draw (abf) -- (c7) node [midway, below left, fill=none]{$M_g$};
\draw (abf) -- (c1) node [midway, above, fill=none]{$M_g$};
\draw (g) -- (c1) node [midway, below right, fill=none]{$M_g$};
\draw (ce) -- (c1) node [midway, above left, fill=none]{$M_g$};
\end{tikzpicture}
\caption{For any $i, j \in \mathcal{C} \cup \mathcal{F}$, $d^g_{ij}$ is the shortest-path distance in the above graph, where $M_g =0.025791$}
\end{subfigure}
\end{figure}
\end{proof}
\FloatBarrier
\bibliographystyle{abbrv}
\bibliography{references.bib}

\appendix
\section{Optimal solution to \eqref{opt-RSCF}}\label{appendix:opt-primal-solution}
The optimal solution to \eqref{opt-RSCF} is presented below (variables whose value is zero have been omitted). {\small 
\begin{align*}
\gamma = 2.063164 \\
\beta^a_{11} =0.905962 &&
\beta^a_{21} = 0.812009 &&
\beta^a_{31} = 0.776597 &&
\beta^a_{41} = 0.668060 \\
\beta^a_{51} = 0.771224 &&
\beta^a_{61} = 0.822806 &&
\beta^a_{71} = 0.653305 &&
\beta^a_{12} = 1.013834 \\
\beta^a_{22} = 1.027753 &&
\beta^a_{32} = 0.884469 &&
\beta^a_{42} = 0.815437 &&
\beta^a_{52} = 0.815437 \\
\beta^a_{62} = 0.844913 &&
\beta^a_{13} = 1.149475 &&
\beta^a_{23} = 1.149475 &&
\beta^a_{33} = 1.149475 \\
\beta^a_{43} = 0.874388 &&
\beta^a_{53} = 0.874388 &&
\beta^a_{63} = 0.874388 &&
\beta^a_{25} = 0.029475 \\
\beta^a_{35} = 0.029475 &&
\beta^a_{45} = 0.107872 &&
\beta^a_{55} = 0.107872 &&
\beta^a_{65} = 0.107872 \\
\beta^a_{76} = 0.223522 &&
\beta^b_{11} = 0.919881 && 
\beta^b_{21} = 0.919881 && 
\beta^b_{31} = 0.878328 \\
\beta^b_{41} = 0.582448 &&
\beta^b_{51} = 0.633313 &&
\beta^b_{61} = 0.822806 &&
\beta^b_{12} = 1.027753 \\
\beta^b_{22} = 1.027753 &&
\beta^b_{32} = 1.027753 &&
\beta^b_{42} = 0.626661 && 
\beta^b_{52} = 0.655420 \\
\beta^b_{62} = 0.844913 &&
\beta^b_{43} = 0.685612 &&
\beta^b_{53} = 0.714371 &&
\beta^b_{63} = 0.874388 \\
\beta^b_{73} = 0.058951 &&
\beta^b_{44} = 0.842814 &&
\beta^b_{54} = 0.792972 &&
\beta^b_{64} = 0.913689 \\
\beta^b_{74} = 0.125271&&
\beta^b_{45} = 1.021561 &&
\beta^b_{55} = 1.021561 &&
\beta^b_{65} = 1.021561 \\
\beta^c_{41} = 0.822806 &&
\beta^c_{51} = 0.633313 &&
\beta^c_{61} = 0.582448 &&
\beta^c_{71} = 0.380657 \\
\beta^c_{12} = 0.107872 &&
\beta^c_{22} = 0.107872 &&
\beta^c_{32} = 0.107872 &&
\beta^c_{42} = 0.844913 \\
\beta^c_{52} = 0.655420 &&
\beta^c_{62} = 0.604554 &&
\beta^c_{72} = 0.537859 &&
\beta^c_{43} = 0.874388 \\
\beta^c_{53} = 0.684896 &&
\beta^c_{63} = 0.634030 &&
\beta^c_{73} = 0.655761 &&
\beta^c_{44} = 0.913689 \\
\beta^c_{54} = 0.724196 &&
\beta^c_{64} = 0.673330 &&
\beta^c_{74} = 0.744188 &&
\beta^c_{45} = 1.021561 \\
\beta^c_{55} = 0.832068 &&
\beta^c_{65} = 0.781203 &&
\beta^c_{75} = 0.852060 &&
\beta^c_{46} = 1.143284 \\
\beta^c_{56} = 1.143284 &&
\beta^c_{66} = 1.143284 &&
\beta^d_{11} = 0.579753 &&
\beta^d_{21} = 0.776597 \\
\beta^d_{31} = 0.441859 &&
\beta^d_{71} = 0.121723 &&
\beta^d_{12} = 0.687625 &&
\beta^d_{22} = 0.884469 \\
\beta^d_{32} = 0.549731 &&
\beta^d_{72} = 0.161023 &&
\beta^d_{13} = 1.038942 &&
\beta^d_{23} = 1.149475 \\
\beta^d_{33} = 0.987360 &&
\beta^d_{73} = 0.190499 &&
\beta^d_{14} = 1.078243 &&
\beta^d_{24} = 1.188776 \\
\beta^d_{34} = 1.144563 &&
\beta^d_{74} = 0.212606 &&
\beta^d_{15} = 1.196145 &&
\beta^d_{25} = 1.218256 \\
\beta^d_{35} = 1.174038 &&
\beta^d_{45} = 0.107872 &&
\beta^d_{55} = 0.107872 &&
\beta^d_{65} = 0.107872 \\
\beta^d_{75} = 0.320478 &&
\beta^d_{16} = 1.240358 &&
\beta^d_{26} = 1.240358 &&
\beta^d_{36} = 1.240358 \\
\beta^d_{76} = 1.240358 &&
\beta^e_{11} = 0.919881 &&
\beta^e_{21} = 0.919881 &&
\beta^e_{31} = 0.919881 \\
\beta^e_{41} = 0.661783 &&
\beta^e_{51} = 0.773953 &&
\beta^e_{61} = 0.602832 &&
\beta^e_{71} = 0.445339 \\
\beta^e_{42} = 0.683889 &&
\beta^e_{52} = 0.796060 &&
\beta^e_{62} = 0.624938 &&
\beta^e_{72} = 0.626286 \\
\end{align*}
\begin{align*}
\beta^e_{13} = 0.121723 &&
\beta^e_{23} = 0.121723 &&
\beta^e_{33} = 0.121723 &&
\beta^e_{43} = 0.713365 \\
\beta^e_{53} = 0.855011 &&
\beta^e_{63} = 0.713365 &&
\beta^e_{73} = 0.655761 &&
\beta^e_{44} = 0.913689 \\
\beta^e_{54} = 0.913689 &&
\beta^e_{64} = 0.913689 &&
\beta^e_{74} = 0.744188 &&
\beta^e_{66} = 0.041552 \\
\beta^f_{11} = 0.919881 &&
\beta^f_{21} = 0.919881 &&
\beta^f_{31} = 0.536904 &&
\beta^f_{41} = 0.800699 \\
\beta^f_{51} = 0.778593 &&
\beta^f_{61} = 0.800699 &&
\beta^f_{71} = 0.574704 &&
\beta^f_{12} = 1.027753 \\
\beta^f_{22} = 1.027753 &&
\beta^f_{32} = 0.758857 &&
\beta^f_{42} = 0.844913 &&
\beta^f_{52} = 0.844913 \\
\beta^f_{62} = 0.844913 &&
\beta^f_{72} = 0.692606 &&
\beta^f_{13} = 1.149475 &&
\beta^f_{23} = 1.149475 \\
\beta^f_{33} = 1.110175 &&
\beta^f_{14} = 1.188776 &&
\beta^f_{24} = 1.188776 &&
\beta^f_{34} = 1.188776 \\
\beta^f_{45} = 0.107872 &&
\beta^f_{55} = 0.101663 &&
\beta^f_{65} = 0.107872 &&
\beta^f_{76} = 0.341424 \\
\beta^g_{11} = 0.776597 &&
\beta^g_{21} = 0.776597 &&
\beta^g_{31} = 0.776597 &&
\beta^g_{41} = 0.822806 \\
\beta^g_{51} = 0.822806 &&
\beta^g_{61} = 0.822806 &&
\beta^g_{71} = 0.580913 &&
\beta^g_{12} = 0.884469 \\
\beta^g_{22} = 0.950789 &&
\beta^g_{32} = 0.884469 &&
\beta^g_{72} = 0.692606 &&
\beta^g_{13} = 1.092503 \\
\beta^g_{23} = 1.072512 &&
\beta^g_{33} = 1.149475 &&
\beta^g_{63} = 0.029475 &&
\beta^g_{73} = 0.722081 \\
\beta^g_{14} = 1.159301 &&
\beta^g_{24} = 1.159301 &&
\beta^g_{34} = 1.188776 &&
\beta^g_{15} = 1.218252 \\
\beta^g_{25} = 1.218252 &&
\beta^g_{35} = 1.218252 &&
\beta^g_{45} = 0.107872 &&
\beta^g_{55} = 0.107872 \\
\beta^g_{65} = 0.107872 &&
\beta^g_{75} = 0.041552 &&
\beta^g_{76} = 0.429851 \\[10pt]
\alpha^{(1), a}_{1, (a, 1, 5)} = 0.695693 &&
\alpha^{(1), a}_{1, (a, 1, 7)} = 0.164776 &&
\alpha^{(1), a}_{1, (a, 2, 4)} =0.051582 &&
\alpha^{(1), a}_{1, (a, 2, 5)} = 0.035412 \\
\alpha^{(1), a}_{1, (a, 2, 6)} = 0.679522 &&
\alpha^{(1), a}_{1, (a, 2, 7)} = 0.107872 &&
\alpha^{(1), a}_{1, (a, 3, 4)} = 0.874388 &&
\alpha^{(1), a}_{1, (a, 4, 7)} = 0.223505 \\
\alpha^{(1), a}_{1, (a, 5, 7)} = 0.418371 && 
\alpha^{(1), a}_{1, (a, 6, 7)} = 0.456034 &&
\alpha^{(1), b}_{1, (b, 1, 5)} = 0.575539 &&
\alpha^{(1), b}_{1, (b, 1, 7)} = 0.338149\\
\alpha^{(1), b}_{1, (b, 2, 4)} = 0.068776 && 
\alpha^{(1), b}_{1, (b, 2, 5)} = 0.143283 &&
\alpha^{(1), b}_{1, (b, 2, 6)} = 0.679522 &&
\alpha^{(1), b}_{1, (b, 2, 7)} = 0.022106 \\
\alpha^{(1), b}_{1, (b, 3, 4)} = 0.740929 &&
\alpha^{(1), b}_{1, (b, 3, 5)} = 0.029475 &&
\alpha^{(1), b}_{1, (b, 3, 7)} = 0.143284 &&
\alpha^{(1), b}_{1, (b, 4, 7)} = 0.110175 \\
\alpha^{(1), b}_{1, (b, 5, 7)} = 0.171582 &&
\alpha^{(1), b}_{1, (b, 6, 7)} = 0.240358 &&
\alpha^{(1), c}_{1, (c, 1, 2)} = 0.229595  &&
\alpha^{(1), c}_{1, (c, 1, 3)} = 0.058951 \\
\alpha^{(1), c}_{1, (c, 1, 4)} = 0.531582 &&
\alpha^{(1), c}_{1, (c, 1, 6)} = 0.240358 &&
\alpha^{(1), c}_{1, (c, 1, 7)} = 0.609023 &&
\alpha^{(1), c}_{1, (c, 2, 3)} = 0.029475 \\
\alpha^{(1), c}_{1, (c, 2, 5)} = 0.291224  &&
\alpha^{(1), c}_{1, (c, 2, 7)} = 0.061407 &&
\alpha^{(1), c}_{1, (c, 3, 5)} = 0.189493 &&
\alpha^{(1), c}_{1, (c, 3, 6)} = 0.291224 \\
\alpha^{(1), c}_{1, (c, 3, 7)} = 0.101510 &&
\alpha^{(1), c}_{1, (c, 5, 7)} = 0.050866 &&
\alpha^{(1), d}_{1, (d, 1, 4)} = 0.250405 &&
\alpha^{(1), d}_{1, (d, 1, 5)} = 0.229595 \\
\alpha^{(1), d}_{1, (d, 1, 6)} = 0.051582 &&
\alpha^{(1), d}_{1, (d, 2, 4)} = 0.143284 &&
\alpha^{(1), d}_{1, (d, 2, 5)} = 0.388299 &&
\alpha^{(1), d}_{1, (d, 3, 4)} = 0.161978 \\
\alpha^{(1), d}_{1, (d, 3, 5)} = 0.140009 &&
\alpha^{(1), d}_{1, (d, 3, 6)} = 0.229595 &&
\alpha^{(1), d}_{1, (d, 4, 5)} = 0.068776 &&
\alpha^{(1), d}_{1, (d, 4, 7)} = 0.143284 \\
\alpha^{(1), d}_{1, (d, 5, 6)} = 0.039301 &&
\alpha^{(1), d}_{1, (d, 6, 7)} = 0.388299 && 
\alpha^{(1), e}_{1, (e, 1, 5)} = 0.422431 && 
\alpha^{(1), e}_{1, (e, 1, 6)} = 0.590874\\
\alpha^{(1), e}_{1, (e, 1, 7)} = 0.022107 &&
\alpha^{(1), e}_{1, (e, 2, 4)} = 0.388299 &&
\alpha^{(1), e}_{1, (e, 2, 5)} = 0.019377 &&
\alpha^{(1), e}_{1, (e, 2, 6)} = 0.320699 \\
\alpha^{(1), e}_{1, (e, 2, 7)} = 0.307036 &&
\alpha^{(1), e}_{1, (e, 3, 4)} = 0.452247 &&
\alpha^{(1), e}_{1, (e, 3, 5)} = 0.286567 &&
\alpha^{(1), e}_{1, (e, 3, 6)} = 0.029475 \\
\alpha^{(1), e}_{1, (e, 3, 7)} = 0.267121 &&
\alpha^{(1), e}_{1, (e, 4, 7)} = 0.201058 &&
\alpha^{(1), e}_{1, (e, 5, 7)} = 0.313228 &&
\alpha^{(1), e}_{1, (e, 6, 7)} = 0.100554\\
\alpha^{(1), f}_{1, (f, 1, 5)} = 0.456614 &&
\alpha^{(1), f}_{1, (f, 1, 6)} = 0.388299 &&
\alpha^{(1), f}_{1, (f, 2, 4)} = 0.577791 &&
\alpha^{(1), f}_{1, (f, 2, 6)} = 0.022107 \\
\alpha^{(1), f}_{1, (f, 3, 4)} = 0.465706 &&
\alpha^{(1), f}_{1, (f, 3, 5)} = 0.224461 &&
\alpha^{(1), f}_{1, (f, 3, 7)} = 0.154746 &&
\alpha^{(1), f}_{1, (f, 4, 7)} = 0.145279 \\
\alpha^{(1), f}_{1, (f, 5, 7)} = 0.262687 &&
\alpha^{(1), f}_{1, (f, 6, 7)} = 0.778371 &&
\alpha^{(1), g}_{1, (g, 1, 4)} = 0.474610 &&
\alpha^{(1), g}_{1, (g, 1, 5)} = 0.056972\\
\alpha^{(1), g}_{1, (g, 1, 6)} = 0.291224 &&
\alpha^{(1), g}_{1, (g, 2, 4)} = 0.320699 &&
\alpha^{(1), g}_{1, (g, 2, 5)} = 0.388299 &&
\alpha^{(1), g}_{1, (g, 2, 6)} = 0.047488 \\
\alpha^{(1), g}_{1, (g, 2, 7)} = 0.066320 &&
\alpha^{(1), g}_{1, (g, 3, 4)} = 0.143284 &&
\alpha^{(1), g}_{1, (g, 3, 5)} = 0.291224 &&
\alpha^{(1), g}_{1, (g, 3, 6)} = 0.388299 \\
\end{align*}
\begin{align*}
\alpha^{(1), g}_{1, (g, 4, 7)} = 0.279659 &&
\alpha^{(1), g}_{1, (g, 5, 7)} = 0.481757 &&
\alpha^{(1), g}_{1, (g, 6, 7)} = 0.491241 &&
\alpha^{(1), b}_{2, (a, 1, 5)} =0.039301 \\
\alpha^{(1), e}_{2, (a, 1, 6)} =0.161023 && 
\alpha^{(1), c}_{2, (a, 1, 7)} = 0.039301 &&
\alpha^{(1), b}_{2, (a, 2, 4)} = 0.039301 &&
\alpha^{(1), e}_{2, (a, 2, 5)} = 0.019377 \\
\alpha^{(1), c}_{2, (a, 2, 7)} = 0.039301 &&
\alpha^{(1), e}_{2, (a, 2, 7)} = 0.141646 &&
\alpha^{(1), b}_{2, (a, 3, 4)} = 0.039301 &&
\alpha^{(1), e}_{2, (a, 3, 4)} = 0.161023 \\
\alpha^{(1), c}_{2, (a, 3, 7)} = 0.039301 &&
\alpha^{(1), d}_{2, (a, 4, 5)} = 0.039301 &&
\alpha^{(1), g}_{2, (a, 4, 7)} = 0.039301 &&
\alpha^{(1), f}_{2, (a, 5, 7)} = 0.039301 \\
\alpha^{(1), g}_{2, (a, 5, 7)} = 0.011804 &&
\alpha^{(1), f}_{2, (a, 6, 7)} = 0.039301 &&
\alpha^{(1), g}_{2, (a, 6, 7)} = 0.021288 &&
\alpha^{(1), c}_{2, (b, 1, 6)} = 0.240358 \\
\alpha^{(1), b}_{2, (b, 1, 7)} = 0.218832 &&
\alpha^{(1), b}_{2, (b, 3, 5)} =0.189493 &&
\alpha^{(1), b}_{2, (b, 3, 7)} =0.040102 &&
\alpha^{(1), a}_{2, (b, 4, 7)} =0.086311 \\
\alpha^{(1), e}_{2, (b, 4, 7)} =0.121723 &&
\alpha^{(1), a}_{2, (b, 5, 7)} =0.229595 && 
\alpha^{(1), e}_{2, (b, 5, 7)} =0.121723 &&
\alpha^{(1), f}_{2, (b, 5, 7)} =0.223386 \\
\alpha^{(1), g}_{2, (b, 5, 7)} =0.229595 &&
\alpha^{(1), a}_{2, (b, 6, 7)} =0.215676 &&
\alpha^{(1), e}_{2, (b, 6, 7)} =0.080171 &&
\alpha^{(1), f}_{2, (b, 6, 7)} =0.229595 \\
\alpha^{(1), g}_{2, (b, 6, 7)} =0.229595 &&
\alpha^{(1), d}_{2, (b, 4, 7)} = 0.143284 &&
\alpha^{(1), d}_{2, (c, 6, 7)} = 0.388299 &&
\alpha^{(1), c}_{2, (d, 1, 4)} = 0.531582 \\
\alpha^{(1), a}_{2, (d, 1, 5)} =0.291224 &&
\alpha^{(1), b}_{2, (d, 1, 5)} =0.291224 &&
\alpha^{(1), e}_{2, (d, 1, 5)} = 0.291224 &&
\alpha^{(1), f}_{2, (d, 1, 5)} =0.291224 \\
\alpha^{(1), g}_{2, (d, 1, 6)} = 0.291224 &&
\alpha^{(1), f}_{2, (d, 2, 4)} =0.291224 &&
\alpha^{(1), g}_{2, (d, 2, 4)} = 0.291224 &&
\alpha^{(1), b}_{2, (d, 2, 5)} = 0.291224 \\
\alpha^{(1), a}_{2, (d, 2, 6)} =0.291224 &&
\alpha^{(1), b}_{2, (d, 2, 6)} = 0.291224 &&
\alpha^{(1), e}_{2, (d, 2, 6)} = 0.291224 &&
\alpha^{(1), a}_{2, (d, 3, 4)} =0.291224 \\
\alpha^{(1), b}_{2, (d, 3, 4)} = 0.291224 &&
\alpha^{(1), e}_{2, (d, 3, 4)} = 0.291224 &&
\alpha^{(1), f}_{2, (d, 3, 4)} =0.095079 &&
\alpha^{(1), f}_{2, (d, 3, 5)} =0.196145 \\
\alpha^{(1), g}_{2, (d, 3, 5)} = 0.291224 &&
\alpha^{(1), c}_{2, (d, 3, 6)} = 0.291224 &&
\alpha^{(1), a}_{2, (f, 1, 5)} =0.029475 &&
\alpha^{(1), e}_{2, (f, 1, 5)} = 0.029475 \\
\alpha^{(1), b}_{2, (f, 1, 7)} = 0.029475 &&
\alpha^{(1), c}_{2, (f, 1, 7)} = 0.088426 &&
\alpha^{(1), c}_{2, (f, 2, 3)} = 0.029475 &&
\alpha^{(1), b}_{2, (f, 2, 4)} = 0.029475 \\
\alpha^{(1), e}_{2, (f, 2, 6)} = 0.029475 &&
\alpha^{(1), b}_{2, (f, 3, 5)} = 0.029475 &&
\alpha^{(1), e}_{2, (f, 3, 6)} = 0.029475 &&
\alpha^{(1), a}_{2, (g, 1, 5)} =0.022107 \\
\alpha^{(1), f}_{2, (g, 1, 5)} =0.022107 &&
\alpha^{(1), b}_{2, (g, 1, 7)} = 0.022107 &&
\alpha^{(1), c}_{2, (g, 1, 7)} = 0.022107 &&
\alpha^{(1), e}_{2, (g, 1, 7)} = 0.022107 \\
\alpha^{(1), a}_{2, (g, 2, 4)} =0.051582 &&
\alpha^{(1), f}_{2, (g, 2, 6)} =0.022107 &&
\alpha^{(1), b}_{2, (g, 2, 7)} = 0.022107 &&
\alpha^{(1), c}_{2, (g, 2, 7)} = 0.022107 \\
\alpha^{(1), e}_{2, (g, 2, 7)} = 0.022107 &&
\alpha^{(1), a}_{2, (g, 3, 4)} =0.051582 &&
\alpha^{(1), b}_{2, (g, 3, 4)} = 0.022107 &&
\alpha^{(1), f}_{2, (g, 3, 5)} =0.022107 \\
\alpha^{(1), c}_{2, (g, 3, 7)} = 0.022107 &&
\alpha^{(1), e}_{2, (g, 3, 7)} = 0.022107 \\
\alpha^{(1), g}_{3, (a, 1, 5)} = 0.027497 &&
\alpha^{(1), g}_{3, (a, 2, 6)} = 0.047489 &&
\alpha^{(1), f}_{3, (a, 3, 4)} = 0.039301 &&
\alpha^{(1), d}_{3, (a, 3, 5)} = 0.117902 \\ 
\alpha^{(1), b}_{3, (a, 4, 7)} = 0.039301 &&
\alpha^{(1), d}_{3, (a, 5, 6)} = 0.039301 &&
\alpha^{(1), c}_{3, (b, 1, 2)} = 0.229595 &&
\alpha^{(1), g}_{3, (b, 1, 4)} = 0.086311\\
\alpha^{(1), d}_{3, (b, 1, 5)} = 0.229595 &&
\alpha^{(1), d}_{3, (b, 2, 4)} = 0.143284 &&
\alpha^{(1), a}_{3, (b, 3, 4)} = 0.143284 &&
\alpha^{(1), d}_{3, (b, 3, 4)} = 0.086311 \\
\alpha^{(1), f}_{3, (b, 3, 4)} = 0.229595 &&
\alpha^{(1), g}_{3, (b, 3, 4)} = 0.143284 &&
\alpha^{(1), d}_{3, (b, 3, 6)} = 0.229595 &&
\alpha^{(1), d}_{3, (c, 1, 4)} = 0.191454 \\
\alpha^{(1), g}_{3, (c, 1, 4)} = 0.388299 &&
\alpha^{(1), a}_{3, (c, 1, 5)} = 0.352887 &&
\alpha^{(1), b}_{3, (c, 1, 5)} = 0.245015 &&
\alpha^{(1), e}_{3, (c, 1, 5)} = 0.101731 \\
\alpha^{(1), f}_{3, (c, 1, 5)} = 0.143284 &&
\alpha^{(1), e}_{3, (c, 1, 6)} = 0.429851 &&
\alpha^{(1), f}_{3, (c, 1, 6)} = 0.388299 &&
\alpha^{(1), a}_{3, (c, 1, 7)} = 0.164776 \\
\alpha^{(1), b}_{3, (c, 1, 7)} = 0.286567 &&
\alpha^{(1), e}_{3, (c, 2, 4)} = 0.388299 &&
\alpha^{(1), f}_{3, (c, 2, 4)} = 0.286567 &&
\alpha^{(1), a}_{3, (c, 2, 5)} = 0.035412 \\
\alpha^{(1), b}_{3, (c, 2, 5)} = 0.143284 &&
\alpha^{(1), d}_{3, (c, 2, 5)} =  0.388299 &&
\alpha^{(1), g}_{3, (c, 2, 5)} =  0.388299 &&
\alpha^{(1), a}_{3, (c, 2, 6)} = 0.388299 \\
\alpha^{(1), b}_{3, (c, 2, 6)} = 0.388299 &&
\alpha^{(1), e}_{3, (c, 2, 7)} =  0.143284 &&
\alpha^{(1), a}_{3, (c, 3, 4)} = 0.388299 &&
\alpha^{(1), b}_{3, (c, 3, 4)} = 0.388299 \\
\alpha^{(1), d}_{3, (c, 3, 4)} =  0.053561 &&
\alpha^{(1), f}_{3, (c, 3, 4)} = 0.101731 &&
\alpha^{(1), e}_{3, (c, 3, 5)} =  0.286567 &&
\alpha^{(1), g}_{3, (c, 3, 6)} =  0.388299 \\
\alpha^{(1), b}_{3, (c, 3, 7)} = 0.101731 &&
\alpha^{(1), e}_{3, (c, 3, 7)} =  0.245015 &&
\alpha^{(1), f}_{3, (c, 3, 7)} = 0.046874 &&
\alpha^{(1), c}_{3, (d, 1, 7)} =  0.240358 \\
\alpha^{(1), a}_{3, (d, 4, 7)} = 0.085612 &&
\alpha^{(1), e}_{3, (d, 4, 7)} =  0.079335 &&
\alpha^{(1), f}_{3, (d, 4, 7)} = 0.123173 &&
\alpha^{(1), g}_{3, (d, 4, 7)} =  0.240358\\
\alpha^{(1), a}_{3, (d, 5, 7)} = 0.188776 &&
\alpha^{(1), b}_{3, (d, 5, 7)} = 0.050865 && 
\alpha^{(1), c}_{3, (d, 5, 7)} =  0.050866 &&
\alpha^{(1), e}_{3, (d, 5, 7)} =  0.191505\\
\alpha^{(1), g}_{3, (d, 5, 7)} =  0.240358 &&
\alpha^{(1), a}_{3, (d, 6, 7)} = 0.240358 &&
\alpha^{(1), b}_{3, (d, 6, 7)} = 0.240358 &&
\alpha^{(1), e}_{3, (d, 6, 7)} =  0.020384 \\
\alpha^{(1), f}_{3, (d, 6, 7)} = 0.509475 &&
\alpha^{(1), g}_{3, (d, 6, 7)} =  0.240358 &&
\alpha^{(1), a}_{3, (e, 2, 7)} = 0.107872 &&
\alpha^{(1), g}_{3, (e, 2, 7)} =  0.066320 
\end{align*}
\begin{align*}
\alpha^{(1), f}_{3, (e, 3, 5)} = 0.006209 &&
\alpha^{(1), b}_{3, (e, 3, 7)} = 0.041552 &&
\alpha^{(1), f}_{3, (e, 3, 7)} = 0.107872 &&
\alpha^{(1), b}_{3, (e, 4, 7)} = 0.070874 \\
\alpha^{(1), b}_{3, (e, 5, 7)} = 0.120716 &&
\alpha^{(1), c}_{3, (f, 1, 3)} =  0.058951 &&
\alpha^{(1), d}_{3, (f, 1, 4)} =  0.058951 &&
\alpha^{(1), g}_{3, (f, 1, 5)} = 0.029475 \\
\alpha^{(1), d}_{3, (f, 1, 6)} = 0.029475 &&
\alpha^{(1), g}_{3, (f, 2, 4)} = 0.029475 &&
\alpha^{(1), d}_{3, (f, 4, 5)} = 0.029475 &&
\alpha^{(1), a}_{3, (f, 4, 7)} =0.029475 \\
\alpha^{(1), d}_{3, (g, 1, 6)} =  0.022107 &&
\alpha^{(1), d}_{3, (g, 3, 4)} =  0.022107 &&
\alpha^{(1), d}_{3, (g, 3, 5)} =  0.022107 &&
\alpha^{(1), a}_{3, (g, 4, 7)} = 0.022107 \\
\alpha^{(1), f}_{3, (g, 4, 7)} = 0.022107 &&
\alpha^{(2), f}_{1, (5, f, c)} =  0.245015 &&
\alpha^{(2), a}_{1, (6, a, b)} =  0.013919 &&
\alpha^{(2), a}_{2, (1, b, a)} =  0.013919 \\
\alpha^{(2), f}_{2, (2, b, f)} =  0.245015
\end{align*}
}%

\section{Optimal solution to \eqref{opt-RSCF-dual}}
Define $M_a = 0.014507$, $M_b = 0.020955$, $ M_c = 0.038866$  $, M_d = 0.051820$, $ M_e = 0.013433$, $ M_f = 0.019343$, and $ M_g = 0.025791$. The values of the non-zero dual variables are given below. 
\begin{equation*}
\begin{array}{lll}
\varphi = 2.063164 \\
d_{a,j}^a = M_a  ~ j \in \mathcal{C}_1 &
d_{b,j}^a = M_a ~ j \in \mathcal{C}_1 &
d_{c,j}^a = M_a ~ j \in \mathcal{C}_1\\
d_{d,j}^a = 3M_a ~ j \in \mathcal{C}_1 & 
d_{e,j}^a = M_a ~ j \in \mathcal{C}_1 &
d_{f,j}^a = 3M_a ~ j \in \mathcal{C}_1 \\
d_{g,j}^a = 3M_a ~ j \in \mathcal{C}_1&
d_{a,j}^a = M_a  ~ j \in \mathcal{C}_2&
d_{b,j}^a = 3M_a ~ j \in \mathcal{C}_2\\
d_{c,j}^a = 3M_a ~ j \in \mathcal{C}_2&
d_{d,j}^a = M_a ~ j \in \mathcal{C}_2&
d_{e,j}^a = 3M_a ~ j \in \mathcal{C}_2\\
d_{f,j}^a = M_a ~ j \in \mathcal{C}_2&
d_{g,j}^a = M_a ~ j \in \mathcal{C}_2&
d_{i, 7}^a = M_a ~ i \in \{a, b\}\\
d_{i, 7}^a = 3M_a ~ i \in \mathcal{F} \setminus \{a, b\}&
d_{a, i}^a = 2M_a~ i \in \mathcal{F} \setminus \{a\} &
d_{b, i}^a = 2M_a~ i \in \mathcal{F} \setminus \{b\} \\
d_{c, i}^a=  2M_a~ i \in \mathcal{F} \setminus \{c, e\} &
d_{d, i}^a=  2M_a~ i \in \mathcal{F} \setminus \{d, f, g\}&
d_{e, i}^a=  2M_a~ i \in \mathcal{F} \setminus \{c, e\} \\
d_{f, i}^a=  2M_a~ i \in \mathcal{F} \setminus \{d, f, g\} &
d_{g, i}^a=  2M_a~ i \in \mathcal{F} \setminus \{d, f, g\} &
    d_{j_1, j_2}^a = 2M_a~ j_1 \in \mathcal{C}_1, j_2 \in \mathcal{C}_2 \\ d_{j,7}^a = 2M_a~ \forall j \in \mathcal{C}_1 \cup \mathcal{C}_2\\[10pt]
d_{a,j}^b = 3M_b ~ j \in \mathcal{C}_1 &
d_{b,j}^b =  M_b ~ j \in \mathcal{C}_1 &
d_{c,j}^b =  M_b ~ j \in \mathcal{C}_1\\
d_{d,j}^b = 3 M_b ~ j \in \mathcal{C}_1 & 
d_{e,j}^b =  M_b ~ j \in \mathcal{C}_1 &
d_{f,j}^b = 3 M_b ~ j \in \mathcal{C}_1 \\
d_{g,j}^b = 3 M_b ~ j \in \mathcal{C}_1 &
d_{a,j}^b =  M_b  ~ j \in \mathcal{C}_2 &
d_{b,j}^b =  M_b ~ j \in \mathcal{C}_2\\
d_{c,j}^b = 3 M_b ~ j \in \mathcal{C}_2 &
d_{d,j}^b =  M_b ~ j \in \mathcal{C}_2 &
d_{e,j}^b =  M_b ~ j \in \mathcal{C}_2\\
d_{f,j}^b =  M_b ~ j \in \mathcal{C}_2 &
d_{g,j}^b =  M_b ~ j \in \mathcal{C}_2 &
d_{i, 7}^b =  2 M_b ~ i \in \mathcal{F} \setminus \{b\}\\
    d_{a, i}^b = 2 M_b  ~ i \in \mathcal{F} \setminus \{a, d, f, g\} &
    d_{b, i}^b =  2 M_b  ~ i \in \mathcal{F} \setminus \{b\} &
    d_{c, i}^b =  2 M_b ~ i \in \mathcal{F} \setminus \{c\}\\
    d_{d, i}^b =  2 M_b  ~ i \in \mathcal{F} \setminus \{a, d, f, g\} &
    d_{e, i}^b =  2 M_b  ~ i \in \mathcal{F} \setminus \{e\} &
    d_{f, i}^b = 2 M_b  ~ i \in \mathcal{F} \setminus \{a, d, f, g\}\\
    d_{g, i}^b = 2 M_b ~ i \in \mathcal{F} \setminus \{a, d, f, g\} &
d_{j_1, j_2}^b = 2 M_b ~ j_1 \in \mathcal{C}_1, j_2 \in \mathcal{C}_2 &
d_{j,7}^b = 3 M_b ~ j \in \mathcal{C}_1\cup \mathcal{C}_2
\end{array}
\end{equation*}

\begin{equation*} 
\begin{array}{lll}
d_{i, j}^c = 2 M_c ~ j \in \mathcal{C}_1, i \in \mathcal{F} \setminus \{c\} &
d_{i,j}^c =   M_c ~ j \in \mathcal{C}_2, i \in \mathcal{F} &
d_{i, 7}^c =   M_c ~ i \in \mathcal{F} \setminus \{d\}\\
d_{d, 7}^c =  3 M_c  &
 d_{c, i}^c = 2 M_c ~ i \in \mathcal{F} \setminus \{c\} &
     d_{d, i}^c = 2 M_c ~ i \in \mathcal{F} \setminus \{d\}\\
     d_{i_1, i_2}^c = 2 M_c ~ i_1 \notin \{c, d\}, i_2 \in \{c, d\}  &
d_{j,7}^c =  M_c ~ j \in \mathcal{C}_1 &
d_{j,7}^c = 2 M_c  ~ j \in \mathcal{C}_2\\
d_{j_1, j_2}^c =  M_c ~ j_1 \in \mathcal{C}_1, j_2 \in \mathcal{C}_2 \\[10pt]
d_{i, j}^d = M_d ~ j \in \mathcal{C}_1, i \in \mathcal{F} &
d_{i,j}^d =  2 M_d  ~ j \in \mathcal{C}_2, i \in \mathcal{F} \setminus \{d\} &
d_{i, 7}^d =  M_d ~ i \in \mathcal{F} \\
d_{i, d}^d =  2M_d ~ i \in \mathcal{F} \setminus \{d\} &
d_{j_1, j_2}^d =  M_d ~ j_1 \in \mathcal{C}_1 j_2 \in \mathcal{C}_2 &
d_{j,7}^d =  M_d ~ j \in \mathcal{C}_2\\[10pt]
d_{i, j}^e = 3 M_e ~ j \in \mathcal{C}_1, i \in \mathcal{F} \setminus \{c, e\} &
d_{i, j}^e =  M_e ~ j \in \mathcal{C}_1, i \in \{c, e\} &
d_{i, j}^e =  M_e ~ j \in \mathcal{C}_2, i \in \mathcal{F} \setminus \{b, c\}\\
d_{i, j}^e = 3 M_e ~ j \in \mathcal{C}_2, i \in \{b, c\} &
d_{i, 7}^e =  M_e ~ i \in \mathcal{F} \setminus \{c, d\} &
d_{i, 7}^e = 3 M_e ~ i \in \{c, d\}\\
d_{i_1, i_2}^e = 2 M_e ~ i_1 \in \{a, f, g\}, i_2 \notin \{a, f, g\} &
    d_{b, i}^e = 2 M_e ~ i \in \mathcal{F} \setminus \{b\} &
    d_{c, i}^e = 2 M_e ~ i \in \mathcal{F} \setminus \{c\}\\
    d_{d, i}^e = 2 M_e ~ i \in \mathcal{F} \setminus \{d\} &
    d_{e, i}^e = 2 M_e ~ i \in \mathcal{F} \setminus \{e\} &
d_{j_1, j_2}^e = 2 M_e ~ j_1 \in \mathcal{C}_1, j_2 \in  \mathcal{C}_2 \\
d_{j,7}^e = 2 M_e ~ j \in \mathcal{C}_1 \cup \mathcal{C}_2 \\[10pt]
d_{i, j}^f =  M_f ~ j \in \mathcal{C}_1, i \in \mathcal{F} \setminus \{d, g\} &
d_{i, j}^f = 3 M_f ~ j \in \mathcal{C}_1, i \in \{d, g\} &
d_{i, j}^f = 3 M_f ~ j \in \mathcal{C}_2, i \in \mathcal{F} \setminus \{d, f, g\}\\
d_{i, j}^f =  M_f ~ j \in \mathcal{C}_2, i \in \{d, f, g\} &
d_{i, 7}^f = 3 M_f ~ i \in \mathcal{F} \setminus \{a, b, f\} &
d_{i, 7}^f =  M_f ~ i \in \{a, b, f\}\\
d_{a, i}^f = 2 M_f ~ i \in \mathcal{F} \setminus \{a, b\} &
    d_{b, i}^f = 2 M_f ~ i \in \mathcal{F} \setminus \{a, b\} &
    d_{c, i}^f =  2 M_f ~ i \in \mathcal{F} \setminus \{c, e\} \\
    d_{d, i}^f =  2 M_f ~ i \in \mathcal{F} \setminus \{d, g\} &
    d_{e, i}^f =  2 M_f ~i \in \mathcal{F} \setminus \{c, e\} &
    d_{f, i}^f =  2 M_f ~ i \in \mathcal{F} \setminus \{f\} \\
    d_{g, i}^f =  2 M_f ~ i \in \mathcal{F} \setminus \{d, g\} &
d_{j_1, j_2}^f = 2 M_f ~ j_1 \in \mathcal{C}_1, j_2 \in \mathcal{C}_2 & 
d_{j,7}^f = 2 M_f ~ \forall j \in \mathcal{C}_1 \cup \mathcal{C}_2 \\[10pt]
d_{i, j}^g =  M_g ~ j \in \mathcal{C}_1, i \in \mathcal{F} \setminus \{d\} &
d_{d, j}^g = 3 M_g ~j \in \mathcal{C}_1 &
d_{i, j}^g = 3 M_g ~ j \in \mathcal{C}_2, i \in \mathcal{F} \setminus \{d, g\}\\
d_{i, j}^g =  M_g ~ j \in \mathcal{C}_2, i \in \{d, g\} & 
d_{i, 7}^g =  M_g ~ i \in \mathcal{F} \setminus \{c, d, e\} & 
d_{i, 7}^g = 3 M_g ~ i \in \{c, d, e\}\\
d_{a, i}^g = 2 M_g ~ i \in \mathcal{F} \setminus \{a, b, f\} & 
    d_{b, i}^g = 2 M_g ~ i \in \mathcal{F} \setminus \{a, b, f\} &
    d_{c, i}^g = 2 M_g ~ i \in \mathcal{F} \setminus \{c, e\} \\
    d_{d, i}^g = 2 M_g ~ i \in \mathcal{F} \setminus \{d\} &
    d_{f, i}^g = 2 M_g ~ i \in \mathcal{F} \setminus \{a, b, f\} &  
    d_{g, i}^g = 2 M_g ~i \in \mathcal{F} \setminus \{g\} \\
d_{j_1, j_2}^g = 2 M_g ~ j_1 \in \mathcal{C}_1, j_2 \in \mathcal{C}_2 &
d_{j,7}^g =2 M_g ~ \forall j \in \mathcal{C}_1 \cup \mathcal{C}_2\\
\end{array}
\end{equation*}

\end{document}